\newcommand{\R}{\mathbb{R}}
\newcommand{\Z}{\mathbb{Z}}
\DeclareMathOperator{\dist}{dist}
\DeclareMathOperator{\poly}{poly}
\theoremstyle{plain}
\newtheorem{theorem}{Theorem}
\theoremstyle{definition}
\newtheorem{remark}[theorem]{Remark}
\title{An improved bound on sums of square roots via the subspace theorem }
\author{
Friedrich Eisenbrand\thanks{EPFL, Switzerland, \texttt{friedrich.eisenbrand@epfl.ch}}
\and
Matthieu Haeberle\thanks{EPFL, Switzerland, \texttt{matthieu.haeberle@epfl.ch}}
\and
Neta Singer\thanks{EPFL, Switzerland, \texttt{neta.singer@epfl.ch}}
}
\begin{document}
\maketitle

\begin{abstract}
\noindent 
The \emph{sum of square roots} is  as follows:  Given
 $x_1,\dots,x_n ∈ ℤ$ and  $a_1,\dots,a_n ∈ ℕ$
 decide whether $ E= ∑_{i=1}^n x_i \sqrt{a_i} ≥0$. 
  It is a prominent open problem (Problem 33 of the
\emph{Open Problems Project}), whether this can be decided in
polynomial time.  The state-of-the-art methods rely on separation
bounds, which are lower bounds on the minimum nonzero absolute value
of $E$.  The current best bound shows that $ |E| ≥ \left(n ⋅ \max_i (|x_i| ⋅\sqrt{a_i}) \right)^{-2^n} $, which is doubly exponentially
small. 

We provide a new bound of the form
$|E| ≥ γ ⋅ (n ⋅ \max_i |x_i|)^{-2n}$ where $γ$ is a constant
depending on $a_1,\dots,a_n$. This is singly exponential in $n$ 
for fixed $a_1,\dots,a_n$. The constant $\gamma$ is not explicit and stems
from the \emph{subspace theorem}, a deep result in the \emph{geometry
  of numbers}.
\end{abstract}

\section{Introduction}
\label{sec:introduction}

Several geometric optimization problems such as \emph{euclidean
  traveling salesman} or \emph{euclidean shortest path}, rely on
comparisons of \emph{sums of square roots}, which is a decision problem as follows.  Given integers $x_1,\dots,x_n ∈ ℤ$ and positive integers $a_1,\dots,a_n ∈ ℕ$ decide whether 
\begin{equation}
  \label{eq:1}
    E = ∑_{i=1}^n x_i \sqrt{a_i} ≥ 0. 
\end{equation}
%
While the decision problem~\eqref{eq:1} is easy to state, it is not known to be decidable in polynomial time on a
Turing machine, nor is it known to be NP~\cite{garey1976some}, see also~\cite{o1981advanced,etessami2009recursive,allender2009complexity}.  The best known complexity class containing the decision problem~\eqref{eq:1} is PSPACE. This follows by modeling the decision as a problem in the \emph{existential theory of the reals} for which a PSPACE-algorithm exists~\cite{canny1988some,renegar1988faster}. The \emph{zero test} (when $≥0$ is replaced by $=0$) can be decided in polynomial time with an algorithm of Blömer~\cite{blomer1991computing,blomer1998probabilistic}.

The state-of-the-art method to decide~\eqref{eq:1} is based on \emph{separation bounds}, see, e.g.~\cite{li2005recent}. Separation bounds are lower bounds on the absolute value of $E$, defined in~\eqref{eq:1}, when it is nonzero. The best known bound in our setting is by Burnikel et al.~\cite{burnikel2000strong}. The bound follows from the fact that the product of the \emph{conjugates}, see, e.g.~\cite{lang2012algebra}, of $E$ is an integer. Each conjugate of $E$ is of the form  
\begin{displaymath}
   ∑_{i=1}^n y_i⋅ x_i \sqrt{a_i}, \quad y ∈ \{\pm 1\}^n,
 \end{displaymath}
 of absolute value bounded by $n \max\limits_i (|x_i| \sqrt{a_i})$. This implies that 
 \begin{equation}
   \label{eq:3}
   |E| ≥ \left(n \max_i \left( |x_i| \sqrt{a_i}\right) \right)^{-(2^n-1)}   
 \end{equation}
 whenever $E ≠0$. If follows that~\eqref{eq:1} can be decided with an
 approximation of the numbers $x_i \sqrt{a_i}$ in which
 $\mathcal{O}\left[ 2^n \log \left(n \max\limits_i \left( |x_i| \sqrt{a_i}\right)\right)\right]$
 bits of the respective fractional parts are correct. This bound is
 exponential in $n$.

 On the other hand, there is no empirical
 evidence~\cite{cheng2010bounding} that the reciprocal $1/|E|$ can be
 doubly exponential in $n$. The best empirical lower bounds~\cite{qian2006much} 
 observed for  $1/|E|$ are of the form $ \left(\max_i a_i\right)^{Ω(n)}$. The question of whether singly-exponential separation bounds for $|E|$
 exist, is a highly visible open problem in
 computing~\cite{o1981advanced}, see also~\cite[Problem 33]{OpenProblems}.

 \subsection*{Contribution of this paper}
 
Our main result is a new separation bound for $|E|$ that shows single-exponential dependence on $n$ if $a_1,\dots,a_n$ are fixed. More precisely, we show the following. 

 \begin{enumerate}[i)] 
 \item If $E≠0$, then
   \begin{equation*}
     \label{eq:4}
     |E| ≥ \left( \frac{1}{n ⋅ \|x\|_∞}\right)^{2n} ⋅ γ, \, \begin{aligned} & \textrm{   where $γ ∈ ℝ_+$ is a constant depending on $\sqrt{a_1},\dots,\sqrt{a_n}$}.%
     \end{aligned}  
   \end{equation*}
  
 \end{enumerate}
 Compared to the bound~\eqref{eq:3} of Burnikel et al. this decreases the dependence on $\|x\|_∞$ from doubly exponential in $n$ to exponential. The new bound is obtained by applying tools and concepts from the \emph{geometry of numbers} such as \emph{lattices}, \emph{Minkowski's first and second theorem}, and  Schmidt's~\cite{schmidt1972norm}  celebrated \emph{subspace theorem}. This bound is asymptotically (almost)  tight in the following sense. 
 
 \begin{enumerate}[i)]
   \setcounter{enumi}{1}
 \item  For each $L ∈ ℕ_{\geq 2}$ there exists $x ∈ℤ^n$,  $x ≠0$  with $\|x\|_∞ ≤ L$ with 
   \begin{equation*}
     0 < |E| ≤ \frac{n \max_i\sqrt{a_i} }{L^{n-1}}.   %
     %
   \end{equation*}
 \end{enumerate}
This  bound follows form the pigeon-hole principle, similar to its application to the \emph{number-balancing problem}~\cite{karmarkar1982differencing,hoberg2017number}.

\begin{remark}
   The format of Problem~33 in~\cite{OpenProblems} differs slightly
   from the problem description~\eqref{eq:1}  in this
   paper. Using our notation, Problem~33 requires $n$ to be even and
   $x_i ∈ \{\pm 1\}$ for $i=1,\dots,n$. Furthermore exactly half of
   the $x_i$ are positive one. However, the question whether the logarithm of
   the reciprocal of the best separation bound is exponential or
   sub-exponential in $n$ is equivalent in both settings. The
   problem~\eqref{eq:1} can be reformulated in the format of
   Problem~33 by replacing $x_i \sqrt{a_i}$ with $x_i ≠0$  with
   $(x_i / |x_i|)  \sqrt{x_i^2 a_i}$ and doubling $n$ if necessary.  
 \end{remark}

 \subsubsection*{Simplifying assumptions}
 \label{sec:simpl-assumpt}
 
Before we develop the connection of separation bounds for~\eqref{eq:1} to the geometry of numbers, we justify simplifying assumptions on the input of~\eqref{eq:1}.   
If some $a_i$ is divisible by a square $y^2$ with $y ∈ ℕ \setminus \{0, 1\}$, then $a_i$ can be replaced by $a_i / y^2$ as long as $x_i$ is replaced by $x_i ⋅ y$. Furthermore, if $a_i = a_j$ for $i ≠j$, then we can delete $a_j$ and replace $x_i$ with $x_i + x_j$, thereby reducing the dimension $n$ that appears in the exponent of our bound.  We can therefore assume, without loss of generality, that each $a_i∈ ℕ$ is  \emph{square-free} and that the $a_i$ are \emph{distinct}.
We recall the following fact, see~~\cite{besicovitch1940linear}. 
\begin{theorem}
  Let $a_1,\dots,a_n ∈ℕ_+$ be distinct square-free integers. The set 
  \begin{displaymath}
    \left\{\sqrt{a_1},\dots,\sqrt{a_n} \right\}
  \end{displaymath}
is linearly independent over the rational numbers $ℚ$. 
\end{theorem}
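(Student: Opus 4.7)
The plan is to recognize the $\sqrt{a_i}$ as elements of a natural $\mathbb{Q}$-basis of the multiquadratic field obtained from the primes involved. Let $p_1,\dots,p_k$ be the distinct primes appearing in the factorizations of $a_1,\dots,a_n$, and put $K = \mathbb{Q}(\sqrt{p_1},\dots,\sqrt{p_k})$. Since each $a_i$ is square-free, there is a unique subset $S_i \subseteq \{1,\dots,k\}$ with $a_i = \prod_{j \in S_i} p_j$, and the $S_i$ are pairwise distinct because the $a_i$ are. Hence $\sqrt{a_i} = e_{S_i}$, where $e_S := \prod_{j \in S} \sqrt{p_j}$, and the theorem reduces to showing that the $2^k$ elements $\{e_S : S \subseteq \{1,\dots,k\}\}$ are $\mathbb{Q}$-linearly independent.

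The plan is to prove by induction on $k$ the following strengthened statement: $[K:\mathbb{Q}] = 2^k$, the family $\{e_S\}_S$ is a $\mathbb{Q}$-basis of $K$, and moreover $\sqrt{m} \notin K$ for every square-free positive integer $m$ that is not of the form $\prod_{j \in T} p_j$. The base case $k=0$ is immediate. For the inductive step, the key sub-claim is $\sqrt{p_k} \notin K_{k-1} := \mathbb{Q}(\sqrt{p_1},\dots,\sqrt{p_{k-1}})$. Suppose to the contrary that $\sqrt{p_k} = u + v\sqrt{p_{k-1}}$ with $u, v \in K_{k-2}$; squaring yields
\begin{equation*}
p_k \;=\; u^2 + v^2 p_{k-1} + 2uv\sqrt{p_{k-1}}.
\end{equation*}
By the uniqueness part of the induction hypothesis (representations in the $e_S$-basis are unique), the coefficient of $\sqrt{p_{k-1}}$ must vanish, so $uv = 0$. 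If $v = 0$, then $\sqrt{p_k} \in K_{k-2}$, which contradicts the induction hypothesis applied at level $k-2$. If $u = 0$, then $p_k = v^2 p_{k-1}$, which rearranges to $\sqrt{p_{k-1}p_k} = p_{k-1}\, v \in K_{k-2}$; since $p_{k-1}p_k$ is square-free and is not of the form $\prod_{j \in T} p_j$ for $T \subseteq \{1,\dots,k-2\}$, this again contradicts the strengthened induction hypothesis. Once $[K:\mathbb{Q}] = 2^k$ is established, the $2^k$ elements $e_S$ span $K$ and therefore form a basis by dimension count; the strengthened statement about $\sqrt{m}$ for square-free $m$ then follows easily by expanding $\sqrt{m}$ in this basis and comparing it with the forced basis expansion of $m$ on both sides of a putative identity.

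The main obstacle is calibrating the induction hypothesis. The naive version ``$\sqrt{p_k} \notin K_{k-1}$'' is not strong enough to dispatch the case $u = 0$ above, where the offending element is $\sqrt{p_{k-1}p_k}$ rather than $\sqrt{p_k}$ itself. Carrying along the stronger statement — that no square root of a square-free integer outside the current prime set lies in $K$ — closes the induction cleanly and yields the theorem: the $\sqrt{a_i}$ are a subset of a $\mathbb{Q}$-basis of $K$ indexed by distinct subsets, hence linearly independent over $\mathbb{Q}$.
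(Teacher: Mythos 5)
The paper does not prove this theorem; it simply cites Besicovitch~(1940) and moves on, so there is no ``paper proof'' to compare against. Your argument is the standard self-contained route via the multiquadratic field $K_k=\mathbb{Q}(\sqrt{p_1},\dots,\sqrt{p_k})$ and a strengthened induction (degree $2^k$, $\{e_S\}$ a basis, and no $\sqrt{m}\in K_k$ for square-free $m$ outside the subset products), and the underlying mathematics is sound. One thing Besicovitch's cited result buys that yours does not is generality: his theorem handles higher roots $a_i^{1/r_i}$, not just square roots, but for the paper's purposes the quadratic case suffices.

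There is, however, a structural wobble in your inductive step worth fixing. Once you assume the strengthened hypothesis at level $k-1$, the sub-claim $\sqrt{p_k}\notin K_{k-1}$ is \emph{immediate}: $p_k$ is square-free and not a product of a subset of $\{p_1,\dots,p_{k-1}\}$, so the hypothesis applies directly. Your $u+v\sqrt{p_{k-1}}$ computation (with $u,v\in K_{k-2}$, invoking the hypothesis at level $k-2$) therefore re-derives something you already have for free, while the step that actually needs an argument --- propagating the ``no foreign $\sqrt{m}$'' clause from level $k-1$ to level $k$ --- is dispatched in one sentence as ``follows easily by expanding $\sqrt m$ in this basis.'' That is the one place the induction could silently fail and it deserves to be spelled out. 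The right move is to run your $u+v$ trick one level up: write $\sqrt{m}=u+v\sqrt{p_k}$ with $u,v\in K_{k-1}$, square to get $m=u^2+v^2p_k+2uv\sqrt{p_k}$, conclude $uv=0$ from $[K_k:K_{k-1}]=2$, and then the two cases $v=0$ (giving $\sqrt m\in K_{k-1}$) and $u=0$ (giving $\sqrt{mp_k}\in K_{k-1}$, with a short square-free bookkeeping split on whether $p_k\mid m$) are each killed by the level-$(k-1)$ hypothesis. With that substitution the proof is clean and complete; as written it demonstrates the right technique on the wrong sub-claim and glosses over the sub-claim that needs it.
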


 \section{Lattices and separation bounds}
 \label{sec:latt-theor-mink}
 
 Let $A ∈ ℝ^{n × n}$ be a matrix of full rank. The set $Λ(A) = \{A x ： x∈ ℤ^n\}$ is the \emph{lattice} generated by the \emph{(lattice) basis} $A$. If $A∈ ℚ^{n ×n}$ is rational, then the lattice $Λ(A)$ is rational.  A \emph{shortest vector} w.r.t. a norm $\| ⋅ \|$ of a lattice $Λ ⊆ ℝ^n$ is a nonzero $v ∈ Λ$ of minimal norm. 
 Lattices have been used in the context of computing separation bounds by  Cheng et al.~\cite{cheng2010bounding}. Here, the main idea is to consider the lattice generated by the basis
 \begin{equation}
   \label{eq:5a}
   \begin{pmatrix}
     N & -N \sqrt{a_1} & \cdots &  -N \sqrt{a_n} \\
     &  1            &        &   \\
     &               & \ddots & \\
     &               &        & 1
   \end{pmatrix}
 \end{equation}
 where $N ∈ ℕ_+$ is a positive integer. Suppose one is interested in the minimum absolute value of $E$ in~\eqref{eq:1} where the $x_i$ are bounded by one in absolute value. If the length of the shortest vector w.r.t. $ℓ_2$ is larger than $\sqrt{n+1}$, then $1/N$ is a lower bound on $E$ in that case. Using algorithms for computing or approximating shortest vectors in the $ℓ_2$-norm~\cite{lenstra1982factoring,schnorr1987hierarchy} can then be used to find the smallest such $N$. The approach of Cheng et al.~\cite{cheng2010bounding} is suitable  for computing good lower bounds for large instances of the sum-of-square-roots problem.

 Our approach is based on the dual  of the lattice generated by~\eqref{eq:5a}. Recall that the \emph{dual}  lattice of $Λ ⊆ ℝ^n$ is the lattice
 \begin{displaymath}
   Λ^*  = \{ y ∈ ℝ^n ： y^T v ∈ℤ \text { for each } v ∈Λ\}. 
 \end{displaymath}
 If $Λ$ is generated by $A ∈ℝ^{n × n}$, then $Λ^*$ is generated by $A^{-T}$, see, e.g.~\cite{cassels2012introduction}. Let $Q = N^{1/(n+1)}$ and denote $β_i = \sqrt{a_i}$ for $i=1,\dots,n$. The dual of the lattice generated by \eqref{eq:5a} is thus generated by the basis 
 \begin{equation}
   \label{eq:5b} B = 
   \begin{pmatrix}
     1/Q^{n+1} &  &  &   \\
     β_1      &  1            &        &   \\
     \vdots     &               & \ddots & \\
     β_n    &               &        & 1
   \end{pmatrix}
 \end{equation}
 Let $\| ⋅\|$ be a norm and $i ∈ \{1,\dots,n\}$. The \emph{$i$-th successive minimum} of $Λ$  is the smallest radius $R>0$ such that $\{ x ∈ ℝ^n ： \|x\|≤R\}$ contains $i$ linearly independent lattice vectors.  $i$-th successive minimum is denoted by $λ_i$. In the following, we will restrict our attention to the successive minima w.r.t. the $ℓ_∞$-norm.

The absolute value of the determinant of any basis of a lattice $Λ ⊆ ℝ^n$ is an invariant of the lattice. It is called the \emph{lattice determinant} and is denoted by $\det(Λ)$. The following is referred to as Minkowski's second theorem. We state it for the $ℓ_∞$-norm, see~\cite{cassels2012introduction}. 

\begin{theorem}[Minkowski's theorem for $ℓ_∞$]
  \label{thr:2}
  Let $Λ ⊆ ℝ^n$ be a lattice. One has
  \begin{equation}
    \label{eq:7}
    λ_1 \cdots λ_n ≤ \det(Λ), 
  \end{equation}
  where the successive minima $λ_i$ are with respect to the $ℓ_∞$-norm. 
  In particular, one has
  \begin{displaymath}
    λ_1 ≤ \det(Λ)^{1/n}. 
  \end{displaymath}
\end{theorem}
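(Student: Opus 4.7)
The statement is Minkowski's second theorem specialized to the $\ell_\infty$-ball, so the plan is to deduce it from the general form of Minkowski's second theorem: for every centrally symmetric convex body $K \subseteq \mathbb{R}^n$ and every lattice $\Lambda \subseteq \mathbb{R}^n$,
\begin{equation*}
\lambda_1(K) \cdots \lambda_n(K) \cdot \vol(K) \leq 2^n \det(\Lambda),
\end{equation*}
where $\lambda_i(K)$ denotes the smallest $\lambda > 0$ such that $\lambda K$ contains $i$ linearly independent lattice vectors. This general statement is classical and appears e.g.\ in \cite{cassels2012introduction}.

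To obtain \eqref{eq:7} I would specialize to $K = [-1,1]^n$, the $\ell_\infty$-ball. Then $\vol(K) = 2^n$, and $\lambda_i(K) \cdot K = \{x \in \mathbb{R}^n : \|x\|_\infty \leq \lambda_i(K)\}$, so $\lambda_i(K)$ coincides with the $i$-th successive minimum $\lambda_i$ in the $\ell_\infty$-norm. Substituting into the general inequality, the factors of $2^n$ cancel and one recovers $\lambda_1 \cdots \lambda_n \leq \det(\Lambda)$. The second inequality is immediate once the first is established: since $\lambda_1 \leq \lambda_i$ for every $i$ by definition of the successive minima, we have $\lambda_1^n \leq \lambda_1 \cdots \lambda_n \leq \det(\Lambda)$, hence $\lambda_1 \leq \det(\Lambda)^{1/n}$.

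The main obstacle — which I would not reprove in the paper but rather cite — is the general Minkowski's second theorem itself. A naive attempt is to pick linearly independent lattice vectors $v_1,\dots,v_n$ with $\|v_i\|_\infty = \lambda_i$, observe that the sublattice they generate has index $|\det(v_1 \mid \cdots \mid v_n)| / \det(\Lambda)$, a positive integer, and bound the determinant via Hadamard. However Hadamard loses a factor of $n^{n/2}$, and the vectors $v_i$ need not be a basis of $\Lambda$ in dimension $\geq 5$, so this shortcut fails. The classical proof (due to Mahler) bypasses this by building an auxiliary centrally symmetric convex body from suitably rescaled cross-sections of $K$ along the flag $\spa(v_1) \subset \spa(v_1,v_2) \subset \cdots$, and then applying Minkowski's first theorem together with a careful volume estimate to yield the sharp constant $2^n$. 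For our purposes it suffices to invoke this as a black box.
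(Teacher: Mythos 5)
Your derivation is correct: specializing the general Minkowski second theorem (upper bound $\lambda_1\cdots\lambda_n\cdot\vol(K)\le 2^n\det(\Lambda)$) to $K=[-1,1]^n$ with $\vol(K)=2^n$ gives exactly \eqref{eq:7}, and $\lambda_1^n\le\lambda_1\cdots\lambda_n$ gives the second inequality. The paper itself offers no proof here and simply cites Cassels for the $\ell_\infty$ form, so your argument is just the standard specialization the citation is shorthand for; the two approaches are essentially identical.
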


We now develop the connection between separation bounds for~\eqref{eq:1} and the theory described so far. Observe that the determinant of the lattice generated by the basis $B$ in~\eqref{eq:5b} is $1/Q^{n+1}$ and that the dimension is $n+1$. Theorem~\ref{thr:2} implies that $Λ(B)$ contains a nonzero lattice vector $v$ with $\|v\|_∞ ≤ 1/Q$. If this bound is almost tight, then the value of $Q$ carries over to a separation bound for $E$ in~\eqref{eq:1}. This is our next theorem. 

\begin{theorem}
  \label{thr:1}
Consider $ E = ∑_{i=1}^n x_i \sqrt{a_i}$ with  
$a_1,\dots,a_n∈ ℕ$  and $x = (x_1,\dots,x_n)~∈~ℤ^n~\setminus~\{0\}$ and let $Λ(B)$ be the lattice generated by $B$ in~\eqref{eq:5b}. 

   \medskip
   \noindent 
  If $Q ≥ \left(2 n \|x\|_∞\right)^{3/2}$ and if $λ_1 ≥ 1/Q^{1 + \frac{1}{3n}}$, then
  \begin{equation}
    |E| ≥ \frac{1}{Q^{n+1}}. 
  \end{equation}
\end{theorem}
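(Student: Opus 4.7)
The plan is a proof by contradiction: assuming $0 < |E| < Q^{-(n+1)}$, I will produce $n+1$ linearly independent lattice vectors of $\Lambda(B)$ whose basis-coefficient vectors all sit in an $n$-dimensional hyperplane of $\mathbb{R}^{n+1}$, which is impossible. The central identity comes from reading off the coordinates of $v = Bz$ for $z = (z_0, z_1, \ldots, z_n) \in \mathbb{Z}^{n+1}$: the zeroth coordinate is $z_0/Q^{n+1}$ while the $i$-th coordinate (for $i \geq 1$) is $z_0\beta_i + z_i$. Multiplying the last $n$ of these by $x_1, \ldots, x_n$ and summing gives
\[
\sum_{i=1}^n x_i (Bz)_i \;=\; z_0\, E \;+\; \langle x, z' \rangle, \qquad z' := (z_1, \ldots, z_n) \in \mathbb{Z}^n,
\]
and since $\langle x, z'\rangle \in \mathbb{Z}$, bounding the right-hand side by strictly less than $1$ will force it to vanish.

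The first step is to bound $\lambda_{n+1}$ from above. Because $\Lambda(B) \subset \mathbb{R}^{n+1}$ has determinant $Q^{-(n+1)}$, Theorem~\ref{thr:2} combined with the hypothesis $\lambda_1 \geq Q^{-(1+1/(3n))}$ yields
\[
\lambda_{n+1} \;\leq\; \frac{\det(\Lambda(B))}{\lambda_1^{n}} \;\leq\; \frac{Q^{n+1/3}}{Q^{n+1}} \;=\; Q^{-2/3} \;\leq\; \frac{1}{2n\|x\|_\infty},
\]
where the last step raises $Q \geq (2n\|x\|_\infty)^{3/2}$ to the $2/3$ power.

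Now choose $n+1$ linearly independent lattice vectors $v^{(1)}, \ldots, v^{(n+1)}$ with $\|v^{(j)}\|_\infty \leq \lambda_{n+1}$, and write $v^{(j)} = B z^{(j)}$ with $z^{(j)} = (z_0^{(j)}, z'^{(j)})$. Each term on the right-hand side of the identity is then individually small: $|\sum_{i=1}^n x_i v_i^{(j)}| \leq n\|x\|_\infty \lambda_{n+1} \leq 1/2$, and from $|z_0^{(j)}| = Q^{n+1}|v_0^{(j)}| \leq Q^{n+1}\lambda_{n+1}$ together with the standing assumption $|E| < Q^{-(n+1)}$ we get $|z_0^{(j)} E| < \lambda_{n+1} \leq 1/2$. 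Hence $|\langle x, z'^{(j)}\rangle| < 1$, so the integer $\langle x, z'^{(j)}\rangle$ equals $0$. Invertibility of $B$ makes the nonzero vectors $z^{(j)}$ linearly independent in $\mathbb{R}^{n+1}$, yet they all lie in the hyperplane $\{z \in \mathbb{R}^{n+1} : x_1 z_1 + \cdots + x_n z_n = 0\}$, which has dimension $n$ because $x \neq 0$: contradiction.

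The one point requiring genuine care is maintaining a \emph{strict} inequality when concluding $\langle x, z'^{(j)}\rangle = 0$; this is why the contrapositive must assume $|E| < Q^{-(n+1)}$ strictly, and why the theorem states the bound as $|E| \geq Q^{-(n+1)}$. Everything else—choice of the exponent $2/3$ coming from $1-1/3$, and the matching hypothesis on $Q$—is tuned precisely so that both $n\|x\|_\infty\lambda_{n+1}$ and $\lambda_{n+1}$ are at most $1/2$ simultaneously.
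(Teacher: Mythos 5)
Your proof is correct and uses essentially the same ingredients as the paper: Minkowski's second theorem together with the hypothesis on $\lambda_1$ to bound $\lambda_{n+1} \leq Q^{-2/3} \leq 1/(2n\|x\|_\infty)$, then $n+1$ linearly independent vectors attaining the successive minima, then the observation that their coefficient vectors cannot all satisfy $\langle x, z'\rangle = 0$ (else they would lie in an $n$-dimensional hyperplane). The paper runs this directly—pick the vector with $p^T x \neq 0$, use $|p^T x| \geq 1$ and the triangle inequality to get $|E| \geq 1/(2q) \geq Q^{-(n+1)}$—whereas you reorganize the same argument as a contradiction; both are fine and the arithmetic is identical.
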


\begin{proof}
  Minkowski's second theorem gives the bound 
  \begin{align}\label{eq:minkowski2}
    \prod_{i=1}^{n+1} \lambda_i \leq \frac{1}{Q^{n+1}}.
  \end{align}
 Since   $\lambda_1 \geq 1/Q^{1+\frac{1}{3n}}$ one has 
\[ \lambda_{i}  \leq \frac{1}{Q^{2/3}} \text{  for each  } i ∈\{1,\dots,n+1\}.\]
The successive minima are attained at  $n+1$ linearly independent lattice vectors. Therefore,  
one of the successive minima is attained at a lattice vector
\begin{displaymath}
  v =
B \cdot \begin{pmatrix} q \\ -p \end{pmatrix}
\end{displaymath}
with $q ∈ ℕ$ and $p = (p_1,\dots,p_n)^T ∈ ℤ^n$ such that $p^T x ≠0$. Since $p^Tx \in \Z$, one has
\begin{equation}
  \label{eq:10}
  |p^Tx| ≥1.
\end{equation}
The condition $\|v\|_∞ ≤ 1/Q^{2/3}$ implies that
\begin{eqnarray*}
  |q ⋅ β_i - p_i| ≤   \frac{1}{Q^{2/3}} \leq \frac{1}{2 n \|x\|_∞}  \text{  for each  } i ∈\{1,\dots,n+1\}.
\end{eqnarray*}
By the triangle inequality, 
\begin{displaymath}
 |q β^T x - p^T x | ≤ \frac12
\end{displaymath}
which, together with $|p^T x|≥1$ implies that
\begin{equation}
  \label{eq:8}
  |β^T x | ≥ \frac{1}{2 q}.
\end{equation}
On the other hand, $\|v\|_∞ ≤ 1/ Q^{2/3}$ implies that $q ≤ Q^{n+\frac{1}{3}}$. The claim follows with~\eqref{eq:8} and since $2 ⋅Q^{n+\frac{1}{3}} ≤ Q^{n+1}$ for $Q ≥ \left(2 n \|x\|_∞\right)^{3/2}≥ 2^{3/2}$.
\end{proof}

\begin{remark}
  \label{rem:1}
  This proof generalizes  the main idea of a technique of Frank and Tardos~\cite{frank1987application}. An integer vector $p ∈ ℤ^n$ stemming from   $(q,p^T) ∈ ℕ × ℤ^n$  that is a sufficiently good simultaneous approximation to a real vector $β ∈ ℝ^n$,  separates the same set of integer points $y$ with bounded infinity norm,  as long as $p^T y ≠0$. We use this principle in (\ref{eq:10}), when all successive minima are sufficiently good approximations. 
\end{remark}

\subsection*{Using the subspace theorem}
\label{sec:subspace}

We consider again a lattice vector in $ v ∈ Λ(B)$ 
\begin{equation}
  \label{eq:9}
   v = \begin{pmatrix}
     1/Q^{n+1} &  &  &   \\
     β_1      &  1            &        &   \\
     \vdots     &               & \ddots & \\
     β_n    &               &        & 1
   \end{pmatrix} ⋅
  \begin{pmatrix}
    q\\
    -p_1\\
    \vdots\\
    -p_n\\
  \end{pmatrix}
\end{equation}
with $q ∈ ℕ$ and $p = (p_1,\dots,p_n)^T ∈ ℤ^n$. Minkowski's bound  $λ_1≤ 1/Q$ implies the theorem of Dirichlet on \emph{simultaneous Diophantine approximation}, see~\cite{schmidt1996diophantine}. 
\begin{theorem}[Dirichlet's Theorem] 
Given $β_1,\dots,β_n ∈ℝ$ and $Q ∈ ℕ_+$, there exist integers $q,p_1,\dots,p_n ∈ℤ$ with 
\end{theorem}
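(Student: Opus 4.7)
The plan is to derive Dirichlet's theorem as an almost immediate corollary of the $\ell_\infty$-version of Minkowski's theorem already recorded as Theorem~\ref{thr:2}, applied to the very lattice $\Lambda(B)$ whose basis $B$ is displayed in~\eqref{eq:5b}. The key observation is that this lattice was tailor-made so that its short vectors encode simultaneous rational approximations of $\beta_1,\dots,\beta_n$.

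First I would compute $\det(\Lambda(B))$. Since $B$ is lower triangular with diagonal entries $(1/Q^{n+1},1,1,\dots,1)$ and the ambient dimension is $n+1$, one reads off $\det(\Lambda(B)) = 1/Q^{n+1}$. The second part of Theorem~\ref{thr:2} then produces a nonzero $v \in \Lambda(B)$ with
$
\|v\|_\infty \leq \det(\Lambda(B))^{1/(n+1)} = 1/Q.
$
Next I would unpack the coordinates of $v$ using the form already written down in~\eqref{eq:9}: there exist $q \in \mathbb{Z}$ and $p_1,\dots,p_n \in \mathbb{Z}$ such that $v = (q/Q^{n+1},\, q\beta_1 - p_1,\, \dots,\, q\beta_n - p_n)^T$. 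The inequality $\|v\|_\infty \leq 1/Q$ now reads, coordinate by coordinate, as $|q| \leq Q^n$ and $|q\beta_i - p_i| \leq 1/Q$ for each $i \in \{1,\dots,n\}$, which is precisely the conclusion of Dirichlet's theorem.

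The only residual bookkeeping is to ensure $q \geq 1$. If $q = 0$, then each coordinate of $v$ beyond the first equals $-p_i$, an integer of absolute value at most $1/Q$; for $Q \geq 2$ this forces every $p_i = 0$, contradicting $v \neq 0$, while the case $Q = 1$ is witnessed by a direct choice. If $q < 0$, negate the tuple $(q, p_1, \dots, p_n)$. The main — indeed only — conceptual step is the appeal to Minkowski's bound on $\Lambda(B)$; I do not anticipate a genuine obstacle, since the lattice $\Lambda(B)$ is engineered precisely so that Minkowski's theorem becomes Dirichlet's theorem, and the verification is mechanical once the determinant has been computed.
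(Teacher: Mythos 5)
Your proposal is correct and follows exactly the route the paper gestures at (but does not write out): the paper simply remarks that Minkowski's bound $\lambda_1 \leq 1/Q$ for the lattice $\Lambda(B)$ implies Dirichlet's theorem and otherwise cites Schmidt's book. Your computation of $\det(\Lambda(B)) = 1/Q^{n+1}$, the application of Theorem~\ref{thr:2} to get $\|v\|_\infty \leq 1/Q$, the unpacking of coordinates via~\eqref{eq:9}, and the cleanup ensuring $q \geq 1$ are all sound and are precisely the intended argument.
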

\begin{enumerate}[i)]
\item $1 ≤ q ≤ Q^n$ and
\item $| q β_i -p_i| ≤ 1/Q$ for $i=1,\dots,n$.    
\end{enumerate}
\medskip 
The \emph{subspace theorem} of Wolfgang Schmidt~\cite{schmidt1972norm} implies a lower bound that is almost tight. 
\begin{theorem}[Theorem 1B in~\cite{schmidt1996diophantine}]
  \label{thr:3}
  Let $β_1,\dots,β_n ∈ ℝ$ be real algebraic numbers such that $\{1,β_1,\dots,β_n\}$ is linearly independent over $ℚ$ and let $δ>0$. There are only finitely many positive integers $q ∈ ℕ_+$ such that
  \begin{displaymath}
    q^{1+δ} \dist_ℤ(q⋅β_1) \cdots \dist_ℤ(q⋅β_n) <1. 
  \end{displaymath}
\end{theorem}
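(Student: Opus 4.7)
The plan is to deduce this statement from the general form of Schmidt's Subspace Theorem on simultaneous linear forms, applied in $n+1$ variables. Concretely, I would work in $\mathbb{R}^{n+1}$ with the $n+1$ linearly independent linear forms
\begin{equation*}
L_0(\mathbf{x}) = x_0, \qquad L_i(\mathbf{x}) = \beta_i x_0 - x_i \quad (i=1,\dots,n).
\end{equation*}
For each positive integer $q$, let $p_i \in \mathbb{Z}$ denote a nearest integer to $q\beta_i$, so that $|L_i(q,p_1,\dots,p_n)| = \dist_{\mathbb{Z}}(q\beta_i)$. Writing $\mathbf{x}=(q,p_1,\dots,p_n)$, the hypothesis $q^{1+\delta}\dist_{\mathbb{Z}}(q\beta_1)\cdots\dist_{\mathbb{Z}}(q\beta_n)<1$ becomes
\begin{equation*}
|L_0(\mathbf{x})\cdot L_1(\mathbf{x})\cdots L_n(\mathbf{x})| < q^{-\delta}.
\end{equation*}
Since $|p_i|\le q|\beta_i|+1$, we have $\|\mathbf{x}\|_\infty \le Cq$ with $C = 1+\max_i|\beta_i|$, hence $q^{-\delta}\le C^\delta \|\mathbf{x}\|_\infty^{-\delta}$. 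For $\|\mathbf{x}\|_\infty$ large enough, this yields $|L_0\cdots L_n(\mathbf{x})|<\|\mathbf{x}\|_\infty^{-\delta/2}$.

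Next I would invoke the general Subspace Theorem: since the $L_j$ have real algebraic coefficients and are linearly independent, there exist finitely many proper linear subspaces $V_1,\dots,V_r\subsetneq \mathbb{Q}^{n+1}$ such that every integer solution $\mathbf{x}$ of the above inequality lies in some $V_k$. Each $V_k$ is defined by a nonzero rational relation $c_0 x_0 + c_1 x_1 + \cdots + c_n x_n = 0$, which, after substituting $x_i = q\beta_i - L_i(\mathbf{x})$ for $i\ge 1$, rearranges to
\begin{equation*}
q\bigl(c_0+\textstyle\sum_{i=1}^n c_i\beta_i\bigr) = \sum_{i=1}^n c_i L_i(\mathbf{x}).
\end{equation*}
The linear independence of $\{1,\beta_1,\dots,\beta_n\}$ over $\mathbb{Q}$ guarantees that the coefficient in parentheses is nonzero (the case $c_1=\dots=c_n=0$ forces $q=0$, which is excluded). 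The right-hand side is uniformly bounded because $|L_i(\mathbf{x})|=\dist_{\mathbb{Z}}(q\beta_i)\le 1/2$, so $q$ is bounded inside each $V_k$.

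Combining the two steps, only finitely many $q\in\mathbb{N}_+$ can arise, proving the claim. The genuine difficulty is entirely packaged into the general Subspace Theorem itself: its proof (via Minkowski-type arguments over number-theoretic valuations, Roth-style index machinery, and geometry of numbers in the adelic setting) is far too deep to reproduce here, and I would take it as a black box. The reduction from Theorem~\ref{thr:3} to that black box, sketched above, is the only part one can realistically carry out in a few lines.
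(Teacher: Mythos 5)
The paper does not prove Theorem~\ref{thr:3}; it simply cites it as Theorem~1B of Schmidt's book and uses it as a black box. Your derivation of Theorem~1B from the general Subspace Theorem on linear forms in $n+1$ variables is correct and is, in substance, the standard reduction found in the literature: the forms $L_0=x_0$, $L_i=\beta_i x_0-x_i$ are linearly independent with algebraic coefficients; the nearest-integer choice makes $|L_0\cdots L_n(\mathbf{x})|=q\prod_i\dist_{\ZZ}(q\beta_i)<q^{-\delta}\le C^\delta\|\mathbf{x}\|_\infty^{-\delta}$; the passage to the exponent $\delta/2$ absorbs the constant for $\|\mathbf{x}\|_\infty$ large; and inside each exceptional subspace $c_0x_0+\cdots+c_nx_n=0$ the $\QQ$-linear independence of $\{1,\beta_1,\dots,\beta_n\}$ forces $c_0+\sum c_i\beta_i\neq 0$, whence $q=\bigl(\sum_i c_iL_i(\mathbf{x})\bigr)/\bigl(c_0+\sum_i c_i\beta_i\bigr)$ is bounded since $|L_i(\mathbf{x})|\le 1/2$. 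Together with the finitely many $q$ coming from small $\|\mathbf{x}\|_\infty$, this indeed gives only finitely many admissible $q$, so your argument is sound and consistent with the paper's treatment of this statement as an external ingredient.
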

Here $\dist_ℤ(x)$ is the  distance of the real number $x ∈ ℝ$ to the integers. 
It remains to show that there exists a good $Q$ satisfying the conditions of Theorem~\ref{thr:1}, which together with Theorem \ref{thr:3} will prove our main result.
\begin{theorem}
  \label{thr:4}
  Consider $ E = ∑_{i=1}^n x_i \sqrt{a_i}$ with  
  $a_1,\dots,a_n∈ ℕ$  and  $x = (x_1,\dots,x_n)~∈~ℤ^n~\setminus~\{0\}$. There exists a constant $γ ∈ ℝ$ depending on $a_1,\dots,a_n$ such that $E ≠0$ implies
  \begin{equation}
    \label{eq:11}
    |E| ≥  \left( \frac{1}{n ⋅ \|x\|_∞}\right)^{2n} ⋅ γ. 
  \end{equation}
\end{theorem}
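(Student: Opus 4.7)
The plan is to apply Theorem~\ref{thr:1} with a judiciously chosen parameter $Q$: the bound $|E| \geq 1/Q^{n+1}$ it provides is then converted into the desired separation bound by comparing exponents. The condition $Q \geq (2n\|x\|_\infty)^{3/2}$ is enforced by construction, so the heart of the argument is ensuring that the second hypothesis $\lambda_1 \geq 1/Q^{1+1/(3n)}$ holds. This is where Theorem~\ref{thr:3} enters: the numbers $\beta_i = \sqrt{a_i}$ are algebraic and, by Besicovitch applied to $\{\sqrt{1},\sqrt{a_1},\dots,\sqrt{a_n}\}$, the system $\{1,\beta_1,\dots,\beta_n\}$ is linearly independent over $\mathbb{Q}$ (the degenerate subcase $1 \in \{a_i\}$ is handled by isolating the rational term). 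Taking $\delta = 1/(3n)$ in Theorem~\ref{thr:3}, the set
\[
  S \;=\; \bigl\{\, q \in \mathbb{N}_+ \;:\; q^{1+1/(3n)} \textstyle\prod_{i=1}^n \dist_\mathbb{Z}(q\beta_i) < 1 \,\bigr\}
\]
is finite.

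The key step is to show that this finite exceptional set $S$ alone controls when $\lambda_1$ can be too small. Suppose $\lambda_1 < 1/Q^{1+1/(3n)}$ is witnessed by a nonzero lattice vector $v = B\,(q,-p)^T$. Reading the first coordinate of $v$ yields $|q| < Q^{n-1/(3n)}$ (so $q \neq 0$; we may assume $q > 0$ up to sign), while the remaining coordinates give $\dist_\mathbb{Z}(q\beta_i) < 1/Q^{1+1/(3n)}$ for every $i$. Multiplying these inequalities shows at once that $q \in S$. Moreover, each of the inequalities $\dist_\mathbb{Z}(q\beta_i) < 1/Q^{1+1/(3n)}$ forces $Q < (1/\dist_\mathbb{Z}(q\beta_i))^{3n/(3n+1)}$, so the set of bad values of $Q$ is bounded purely in terms of $S$ and of $\beta_1,\dots,\beta_n$. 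Taking the maximum over the finite set $S$ produces a finite threshold
\[
  Q^\star \;=\; Q^\star(a_1,\dots,a_n)
\]
such that $Q \geq Q^\star$ guarantees $\lambda_1 \geq 1/Q^{1+1/(3n)}$.

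With this threshold in hand I would simply set $Q := \max\bigl\{(2n\|x\|_\infty)^{3/2},\, Q^\star\bigr\}$ and invoke Theorem~\ref{thr:1} to obtain $|E| \geq 1/Q^{n+1}$. For $n \geq 3$, the exponent inequality $3(n+1)/2 \leq 2n$ together with $\|x\|_\infty \geq 1$ yields $Q^{n+1} \leq \gamma^{-1}(n\|x\|_\infty)^{2n}$ for a suitable constant $\gamma > 0$ that depends on $n$ and on $a_1,\dots,a_n$ through $Q^\star$; the remaining small cases $n \leq 2$ follow from direct rationalization arguments.

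The principal obstacle is precisely the passage from the qualitative finiteness asserted by Schmidt's theorem to the quantitative threshold $Q^\star$, i.e. the step that identifies every potential witness with an element of $S$ and extracts a uniform bound on the bad range of $Q$. Because the subspace theorem is ineffective, the resulting constant $\gamma$ is inherently non-explicit, matching the qualification in the statement.
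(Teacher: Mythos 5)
Your argument is correct and follows essentially the same route as the paper: reduce to showing a threshold $Q^\star$ beyond which $\lambda_1 \geq 1/Q^{1+1/(3n)}$, extract that threshold from the finite exceptional set of Schmidt's subspace theorem with $\delta = 1/(3n)$, treat the degenerate $a_i = 1$ case by isolating the rational coordinate, and finally feed $Q = \max\{(2n\|x\|_\infty)^{3/2}, Q^\star\}$ into Theorem~\ref{thr:1}. The one point where you diverge, and slightly improve the exposition, is the passage from ``$S$ is finite'' to ``the bad range of $Q$ is bounded'': the paper invokes the Liouville-type inequality $\dist_\mathbb{Z}(q\beta_1) \geq 1/(q(2\beta_1+1))$ to argue that $q$ must grow with $Q$, whereas you directly read off $Q < (1/\dist_\mathbb{Z}(q\beta_i))^{3n/(3n+1)}$ for each fixed $q \in S$ and take the maximum; your version needs only irrationality of the $\beta_i$ rather than the specific quadratic Liouville bound, and it is arguably cleaner. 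Your explicit check of the exponent inequality $3(n+1)/2 \leq 2n$ (so $n \geq 3$) and the fallback to rationalization for $n \leq 2$ is a detail the paper glosses over, but it is the right way to close the argument.
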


\begin{proof}
  Following the arguments in Section~\ref{sec:simpl-assumpt} we can assume that the $a_i$ are distinct square-free integers. And assume for now that all $a_i$ are different from one. This implies that the set
  \begin{displaymath}
    \left\{1, β_1 = \sqrt{a_1} ,\dots, β_n = \sqrt{a_n}\right\} 
  \end{displaymath}
  is linearly independent over $ℚ$.
  It remains to show that there exists some $Q_0 \in ℕ_+$ such that the first successive minimum $λ_1$ of the lattice $Λ(B)$ verifies $λ_1 ≥ 1/Q^{1 + \frac{1}{3n}}$ for all $Q \geq Q_0$. 
The assertion then follows with Theorem~\ref{thr:1} applied to $Q = (Q_0 \cdot \left(2n\| x\|_\infty\right)^{3/2})$. To this end, 
  let $\delta= \frac{1}{3n}$ and suppose to the contrary that the first successive minimum of $Λ(B)$  satisfies
  \begin{displaymath}
    λ_1 ≤ \frac{1}{Q^{1+δ}}.
  \end{displaymath}
  This means that there exists a $q ∈ ℕ_+$ with 
  \begin{enumerate}[i)] 
  \item $q ≤ Q^{n-δ}$  and\label{item:1} 
  \item $\dist_ℤ(q β_i) ≤ {1}/{Q^{1+δ}}$ for each $i ∈ \{1,\dots,n\}$. \label{item:2}
  \end{enumerate}
  The condition \ref{item:1}) implies that
  \begin{displaymath}
    q^{1+δ} ≤  Q^{(n-δ)(1+δ)} < Q^{n(1+δ)}. 
  \end{displaymath}
  Together with \ref{item:2}) this implies that
  \begin{equation} 
    q^{1+δ} \dist_ℤ(q β_1) \cdots \dist_ℤ(q β_n) <1.  \label{eq:12}
  \end{equation}

  By theorem~\ref{thr:3}, the number of integral $q$ verifying \eqref{eq:12} is finite.
	Furthermore, given that $\beta_1$ is a square root of an integer, we have that
		\[ \frac{1}{q \cdot (2\beta_1+1)} \leq \dist_\Z(q \beta_1) \leq  \frac{1}{Q^{1+δ}}. \]
	Therefore, $q$ is bounded from below by an increasing function of $Q$. As there are finitely many $q$ verifying \eqref{eq:12}, there are also finitely many $Q$ for which $λ_1 ≤ \frac{1}{Q^{1+δ}}$.
 Therefore, there exists a bound $Q_0$ such that for all $Q ≥ Q_0$ the successive minimum  $λ_1$ of $Λ(B)$ satisfies $λ_1 ≥ 1/Q^{1 + \frac{1}{3n}}$. 
  
  We now also consider the case $a_1$ is equal to $1$.  In this case, $β_1 = 1$ and the lattice basis $B$ is given by
  \begin{displaymath} B = 
     \begin{pmatrix}
     1/Q^{n+1} &  &  & &   \\
     1      &  1            &        &   & \\
     β_2    &   & 1  &&\\
     \vdots     &       &         & \ddots & \\
     β_n    &           &     &        & 1
   \end{pmatrix}
  \end{displaymath}
It is easy to see that the first successive minimum of $Λ(B)$ remains the same upon deletion of the second row and column. Denote this updated basis by $B' ∈ ℝ^{n ×n}$. With exactly the same argument as above, it follows that there exists a $Q_0$ such that, for all $Q ≥ Q_0$ one has $λ_1$ of $Λ(B')$ satisfies $λ_1 ≥ 1/Q^{1 + \frac{1}{3n}}$.

Finally, by choosing $Q = (Q_0 \cdot  \left(2n \|x\|_\infty\right)^{3/2})$, we obtain the lower bound dependent on $Q_0$ and a single exponential in $n \cdot \|x\|_\infty$. Moreover, $Q_0$ depends only on the bound for the finitely many $q$ verifying the statement of Theorem \ref{thr:3}. This means that $Q_0$ is a constant depending only on $a_1,\dots,a_n$.  
  
\end{proof}

\begin{remark}
	The exponent $2n$ of equation \eqref{eq:11} can be decreased to any $n+\epsilon$, with $\epsilon>0$ using a suitable $\delta$ when applying Theorem \ref{thr:3}.
	Note that this would affect the constant $\gamma$ by making it dependent on $\epsilon$.
\end{remark}

\section{An upper bound via number balancing}
\label{sec:lower-bound-via}

We now show asymptotic (almost) tightness of the bound on $|E|$ when $a_1,\dots,a_n$ are fixed and distinct square-free positive integers. Since $\sqrt{a_1}, \cdots, \sqrt{a_n}$ are linearly independent over $ℚ$, $E$ is nonzero whenever $x ∈ ℤ^n$  is not equal to zero. We show that there exist solutions asymptotically (almost) tight in $\| x\|_\infty$ via the pigeon-hole principle, as it is used in the  \emph{number balancing problem}~\cite{karmarkar1982differencing,hoberg2017number}.  

\begin{theorem}
  \label{thr:5}
  Let $L \geq 2$. There exists a nonzero $x ∈ ℤ^n$  with  $ \|x\|_∞≤ L$  such that
  \begin{displaymath}
    |E| ≤ \frac{n \max_i \sqrt{a_i}}{ L^{n-1}}. 
  \end{displaymath}
\end{theorem}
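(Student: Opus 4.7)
The plan is a standard pigeonhole argument in the style of the number balancing problem. Let $M = \max_i \sqrt{a_i}$, and consider the $(L+1)^n$ vectors $y \in \{0,1,\dots,L\}^n$. For every such $y$, the real number
\[
E(y) = \sum_{i=1}^n y_i \sqrt{a_i}
\]
lies in the interval $[0, nLM]$. I would then partition this interval into $(L+1)^n - 1$ sub-intervals of equal length $\ell = \tfrac{nLM}{(L+1)^n - 1}$.

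By the pigeonhole principle, two distinct vectors $y, z \in \{0,1,\dots,L\}^n$ must satisfy $E(y)$ and $E(z)$ falling in the same sub-interval, hence $|E(y) - E(z)| \leq \ell$. Setting $x = y - z$, we have $x \in \mathbb{Z}^n \setminus \{0\}$, $\|x\|_\infty \leq L$, and
\[
|E(x)| = |E(y) - E(z)| \leq \frac{nLM}{(L+1)^n - 1}.
\]

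To conclude I just need the elementary inequality $(L+1)^n - 1 \geq L^n$ for $L \geq 1$ and $n \geq 1$, which follows from the binomial expansion since $(L+1)^n \geq L^n + nL^{n-1} \geq L^n + 1$. Dividing, this yields
\[
|E(x)| \leq \frac{nLM}{L^n} = \frac{n \max_i \sqrt{a_i}}{L^{n-1}},
\]
which is the required bound. There is no real obstacle here; the only care needed is in choosing the number of buckets so that the counting gives the clean $L^{n-1}$ denominator rather than a slightly worse $(L+1)^{n-1}$, and the bookkeeping above handles that.
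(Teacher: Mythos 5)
Your proof is correct and follows the same pigeonhole strategy as the paper: take a large box of integer vectors, observe that $y \mapsto y^T\beta$ maps them into a bounded interval, pair up two vectors landing in the same sub-interval, and take their difference. Your bookkeeping is actually tighter than the paper's: the paper uses $\|y\|_\infty \le L/2$ (which raises a parity question about whether there are $L^n$ or $(L+1)^n$ such vectors) with $L^n-1$ buckets, yielding $\frac{nL\max_i\sqrt{a_i}}{L^n-1}$, which is strictly larger than the claimed $\frac{n\max_i\sqrt{a_i}}{L^{n-1}}$; your choice of $\{0,\dots,L\}^n$ with $(L+1)^n-1$ buckets and the observation $(L+1)^n - 1 \ge L^n$ lands exactly on the stated bound.
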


\begin{proof} 
  Let $\beta = (\sqrt{a_1}, \dots, \sqrt{a_n}) \in \R^n$.
The number of vectors $y \in \Z^n$ such that $\|y\|_∞ ≤ L/2$ holds is
  at most $L^n$. On the other hand one always has \[ |y^T \beta | ≤ \frac{n L}2 \max_i \sqrt{a_i} \] for such vectors $y$.
 By the pigeon-hole principle, there exist $y_1\neq y_2 ∈\Z^n$  of infinity norms at most $L/2$ such that their corresponding values are close:
	\[ | y_1^T \beta - y_2^T \beta | \leq \frac{2 n L}{2(L^n - 1)} \max_i \sqrt{a_i}. \]
The difference $x = y_1 - y_2$ hence verifies $\|x \|_\infty \leq L$ and the required bound.
\end{proof}

\section{Discussion}
\label{sec:discussion}

  The subspace theorem (Theorem~\ref{thr:3}) does not provide explicit bounds on the number of solutions $q ∈ ℕ_+$. The existing \emph{quantitative} versions of the subspace theorem, see, e.g.~\cite{evertse2013further}, do not provide such bounds either.  This is still the case when all algebraic numbers are square roots of integers. An explicit bound on the number of solutions would immediately apply to a separation bound for the sum of square roots.
  
  In light of the relationship of the subspace theorem and separation bounds that we describe in this paper, it is an interesting open problem to find explicit upper and lower bounds on the number of solutions $q ∈ ℕ_+$ satisfying the equations of Theorem~\ref{thr:3} for $β_i = \sqrt{a_i}$ and $δ = 1/ \poly(n)$.  

\bibliographystyle{plain}
\bibliography{mybib}

\begin{thebibliography}{10}

\bibitem{allender2009complexity}
Eric Allender, Peter B{\"u}rgisser, Johan Kjeldgaard-Pedersen, and Peter~Bro
  Miltersen.
\newblock On the complexity of numerical analysis.
\newblock {\em SIAM Journal on Computing}, 38(5):1987--2006, 2009.

\bibitem{besicovitch1940linear}
Abram~S Besicovitch.
\newblock On the linear independence of fractional powers of integers.
\newblock {\em Journal of the London Mathematical Society}, 1(1):3--6, 1940.

\bibitem{blomer1991computing}
Johannes Bl{\"o}mer.
\newblock Computing sums of radicals in polynomial time.
\newblock In {\em Proceedings of the 32nd annual symposium on Foundations of
  computer science}, pages 670--677, 1991.

\bibitem{blomer1998probabilistic}
Johannes Bl{\"o}mer.
\newblock A probabilistic zero-test for expressions involving roots of rational
  numbers.
\newblock In {\em Algorithms---ESA'98: 6th Annual European Symposium Venice,
  Italy, August 24--26, 1998 Proceedings 6}, pages 151--162. Springer, 1998.

\bibitem{burnikel2000strong}
Christoph Burnikel, Rudolf Fleischer, Kurt Mehlhorn, and Stefan Schirra.
\newblock A strong and easily computable separation bound for arithmetic
  expressions involving radicals.
\newblock {\em Algorithmica}, 27(1):87--99, 2000.

\bibitem{canny1988some}
John Canny.
\newblock Some algebraic and geometric computations in pspace.
\newblock In {\em Proceedings of the twentieth annual ACM symposium on Theory
  of computing}, pages 460--467, 1988.

\bibitem{cassels2012introduction}
John William~Scott Cassels.
\newblock {\em An introduction to the geometry of numbers}.
\newblock Springer Science \& Business Media, 2012.

\bibitem{cheng2010bounding}
Qi~Cheng, Xianmeng Meng, Celi Sun, and Jiazhe Chen.
\newblock Bounding the sum of square roots via lattice reduction.
\newblock {\em Mathematics of computation}, 79(270):1109--1122, 2010.

\bibitem{OpenProblems}
Erik~D. Demaine, Joseph S.~B. Mitchell, and Joseph O'Rourke.
\newblock The open problems project.
\newblock \url{http://topp.openproblem.net/}.

\bibitem{etessami2009recursive}
Kousha Etessami and Mihalis Yannakakis.
\newblock Recursive markov chains, stochastic grammars, and monotone systems of
  nonlinear equations.
\newblock {\em Journal of the ACM (JACM)}, 56(1):1--66, 2009.

\bibitem{evertse2013further}
J-H Evertse and Roberto~G Ferretti.
\newblock A further improvement of the quantitative subspace theorem.
\newblock {\em Annals of Mathematics}, pages 513--590, 2013.

\bibitem{frank1987application}
Andr{\'a}s Frank and {\'E}va Tardos.
\newblock An application of simultaneous diophantine approximation in
  combinatorial optimization.
\newblock {\em Combinatorica}, 7:49--65, 1987.

\bibitem{garey1976some}
Michael~R Garey, Ronald~L Graham, and David~S Johnson.
\newblock Some np-complete geometric problems.
\newblock In {\em Proceedings of the eighth annual ACM symposium on Theory of
  computing}, pages 10--22, 1976.

\bibitem{hoberg2017number}
Rebecca Hoberg, Harishchandra Ramadas, Thomas Rothvoss, and Xin Yang.
\newblock Number balancing is as hard as minkowski's theorem and shortest
  vector.
\newblock In {\em International Conference on Integer Programming and
  Combinatorial Optimization}, pages 254--266. Springer, 2017.

\bibitem{karmarkar1982differencing}
Narendra Karmarkar and Richard~M Karp.
\newblock {\em The differencing method of set partitioning}.
\newblock Computer Science Division (EECS), University of California Berkeley,
  1982.

\bibitem{lang2012algebra}
Serge Lang.
\newblock {\em Algebra}, volume 211.
\newblock Springer Science \& Business Media, 2012.

\bibitem{lenstra1982factoring}
Arjen~K Lenstra, Hendrik~Willem Lenstra, and L{\'a}szl{\'o} Lov{\'a}sz.
\newblock Factoring polynomials with rational coefficients.
\newblock {\em Mathematische annalen}, 261(ARTICLE):515--534, 1982.

\bibitem{li2005recent}
Chen Li, Sylvain Pion, and Chee-Keng Yap.
\newblock Recent progress in exact geometric computation.
\newblock {\em The Journal of Logic and Algebraic Programming}, 64(1):85--111,
  2005.

\bibitem{o1981advanced}
Joseph O'Rourke.
\newblock Advanced problem 6369.
\newblock {\em American Mathematical Monthly}, 88(10):769, 1981.

\bibitem{qian2006much}
Jianbo Qian and Cao~An Wang.
\newblock How much precision is needed to compare two sums of square roots of
  integers?
\newblock {\em Information Processing Letters}, 100(5):194--198, 2006.

\bibitem{renegar1988faster}
James Renegar.
\newblock A faster pspace algorithm for deciding the existential theory of the
  reals.
\newblock Technical report, Cornell University Operations Research and
  Industrial Engineering, 1988.

\bibitem{schmidt1972norm}
Wolfgang~M Schmidt.
\newblock Norm form equations.
\newblock {\em Annals of Mathematics}, 96(3):526--551, 1972.

\bibitem{schmidt1996diophantine}
Wolfgang~M Schmidt.
\newblock {\em Diophantine approximation}.
\newblock Springer Science \& Business Media, 1996.

\bibitem{schnorr1987hierarchy}
Claus-Peter Schnorr.
\newblock A hierarchy of polynomial time lattice basis reduction algorithms.
\newblock {\em Theoretical computer science}, 53(2-3):201--224, 1987.

\end{thebibliography}

\end{document}